\setlist{parsep=0pt,itemindent=0pt}
\theoremstyle{definition}
\newtheorem{thm}{Theorem}
\newtheorem{rmk}{Remark}
\newtheorem{exa}{Example}
\numberwithin{equation}{section}
\numberwithin{thm}{section}
\numberwithin{lemma}{section}
\numberwithin{prop}{section}
\numberwithin{cor}{section}
\numberwithin{rmk}{section}
\numberwithin{defn}{section}
\numberwithin{exa}{section}
\newcommand{\gen}[1]{\partial_{#1}}
\newcommand{\lie}{\mathfrak g}
\DeclareMathOperator{\Sl}{sl}
\begin{document}
\pagenumbering{arabic}
\clearpage
\thispagestyle{empty}

\title{Linearizability for third order evolution equations}

\author[1]{P.~Basarab-Horwath\thanks{peter.basarab-horwath@liu.se}}
\author[2]{F.~G\"ung\"or\thanks{gungorf@itu.edu.tr}}

\affil[1]{Department of Mathematics, Link\"oping University,  S-581 83
Link\"oping, Sweden}
\affil[2]{Department of Mathematics, Faculty of Science and Letters, Istanbul Technical University, 34469 Istanbul, Turkey}

\date{}

\maketitle



\begin{abstract}
The problem of linearization for third order evolution equations is considered. Criteria for testing equations for linearity are presented. A class of linearizable equations depending on arbitrary functions is obtained by requiring presence of an infinite-dimensional symmetry group. Linearizing transformations for this class are  found using  symmetry structure and local conservation laws. A number of special cases as examples are discussed. Their transformation to equations within the same class by differential substitutions and connection with KdV and mKdV equations are also reviewed in this framework.

\end{abstract}

\section{Introduction}
In a series of papers \cite{GungorLahnoZhdanov2004, Basarab-HorwathGuengoerLahno2013, Basarab-HorwathGuengoerLahno2013a} we investigated the group classification problem for third order nonlinear evolution equations. In \cite{GungorLahnoZhdanov2004} the main focus was on classification of the lower-dimensional Lie point symmetry algebras $\lie$ of dimension $\leq 4$ for equations of the form
\begin{equation}\label{main-F}
u_t=u_{xxx}+F(t,x,u,u_x,u_{xx}).
\end{equation}
The Ref. \cite{Basarab-HorwathGuengoerLahno2013, Basarab-HorwathGuengoerLahno2013a} were devoted to equations of the class
\begin{equation}\label{main-F-G}
u_t=F(t,x,u,u_x,u_{xx})u_{xxx}+G(t,x,u,u_x,u_{xx}), \quad F\not=0.
\end{equation}
The classification for \eqref{main-F-G} for semi-simple Lie point symmetry algebras was carried out in \cite{Basarab-HorwathGuengoerLahno2013}, and for solvable algebras in a separate work \cite{Basarab-HorwathGuengoerLahno2013a}. In these two papers only non-linear equations were listed and linear(izable) equations were identified using  the following criterion for linearizability: if an equation admits as a symmetry algebra an abelian algebra of dimension four, or a rank-one realization of a three-dimensional Lie algebra, then it is linearizable  (for details, see \cite{Basarab-HorwathGuengoerLahno2013}). For solvable algebras we give a similar test which can be used  to eliminate linear equations in the course of classification. By a linearizable equation we mean  one which is either obviously linear in $u, u_x, u_{xx}, u_{xxx}$ or one which is transformable to such a linear equation by an admissible equivalence transformation.

An alternative to using these types of theorems to identify the symmetry structure of linearizable equations we can of course test  a given equation for linearizability by looking at the maximum symmetry group: if the equation is linearizable then it admits an infinite (point) symmetry algebra.
The existence of an infinite-dimensional (point) symmetry algebra can still serve as a criterion of linearizability by point transformation, but this is only practical if the Lie algorithm is easy to apply. Conversely, we can generate a family of linearizable equations by requiring  the equation at hand to be invariant under some given infinite-dimensional subalgebra. This is done in section \ref{lin-evol-eqs} of this article. Once we know the symmetry algebra that contains the infinite-dimensional ideal of the full algebra, it is easy to find the corresponding linearizing transformation by transforming the vector field to one manifesting a linear superposition law for some linear KdV equation.

In section \ref{con-law} we construct conservation laws for a class of nonlinear evolution equations which are linearizable for a special choice of the coefficients. Using these laws we investigated the linearizability of such a class.

This paper is organized as follows: In section \ref{g-class-LE} we revisit the group classification problem for linear evolution equations. In section \ref{linear-crit} we state two  theorems for equations with abelian symmetry algebras of dimension $\geq 3$ and rank-one solvable algebras of dimension three as linearizability criteria. In section \ref{lin-evol-eqs} we derive a class of linearizable equations using infinitesimal Lie point symmetries and conservation laws. We illustrate our results with a number of examples.

\section{Classification of linear equations}\label{g-class-LE}
In \cite{GungorLahnoZhdanov2004} we showed that the most general linear third order PDE
\begin{equation} \label{linear-general}
u_t = f_1 (x,t) u_{xxx} +f_2(x,t) u_{xx} +f_3(x,t) u_x +f_4(x,t) u
+f_5(x,t).
\end{equation}
can be transformed to either
\begin{equation}  \label{linear-normal}
u_t = u_{xxx} +A(t,x) u_x +B(t,x) u,
\end{equation}
or
\begin{equation}  \label{linear-normal-2}
u_t = u_{xxx} +A(t,x) u_{xx} +B(t,x) u_x,
\end{equation}
where $A, B$ are arbitrary smooth functions of $t$ and $x$,
by the equivalence transformation
\begin{equation}\label{trans}
\tilde t = t,\quad \tilde x = X(t,x), \quad u = U_1(t,x) \tilde{u}(\tilde t,
\tilde x) +U_0(t,x), \quad U_1 \not =0, \ X_x \not =0.
\end{equation}
\begin{table}\caption{Symmetry Classification of \eqref{linear-normal}}\label{tab1}
\begin{center}
\vskip 3mm
\begin{tabular}{|c|c|c|l|}
\hline
N   & $A$ & $B$ & Generators \\[2mm] \hline
1& $A(x)$ &  $B(x)$ &$\partial_t$ \\[2mm] \hline &&& \\
2& $0$ & $\dot f (t)x$
&$\partial_x+f(t) u \partial_u, \ \ddot f \not =0$
\\[2mm] \hline &&& \\
3&  $m x^{-2}, \ m \in \mathbb{R}$ & $n x^{-3}, \ n
\in \mathbb{R},$ &$\partial_t, t \partial_t +\frac{1}{3} x
\partial_x$
\\[2mm] & & $ \ |m|+|n| \not =0$ & \\[2mm] \hline &&& \\
4&  $0$ &
$\varepsilon x, \ \varepsilon = \pm 1$ &$\partial_t,  \partial_x
+\varepsilon t u
\partial_u$ \\[2mm] \hline &&& \\
5&  $0$ & $-m
t^{-\frac{4}{3}}x,$ \   &$\partial_x+3m
t^{-\frac{1}{3}}u\partial_u,$ \\[2mm] & &$\ m \in \mathbb{R}, \ m \not
=0$&$t \partial_t +\frac{1}{3} x \partial_x$ \\[2mm] \hline
\end{tabular}
\end{center}
\end{table}
The same problem for $n=4$ was undertaken in \cite{HuangQuZhdanov2012}. In a very recent work, an extension to $n+1$-dimensional case for $n>2$  of the group classification problem for linear evolution equations has been carried out in \cite{BihloPopovych2016} where the authors solved the problem in full generality and compared their results with those of $n=3$ of \cite{GungorLahnoZhdanov2004} (also for $n=4$ of \cite{HuangQuZhdanov2012}) commented that there is a redundant  case, where $A(x)=a\in \mathbb{R}$, $B(x)=0$ with generators $\gen t, \gen x, t\gen t+1/3(x-2at)\gen x$ corresponding to $N=6$ in \cite{GungorLahnoZhdanov2004}), that can be removed by an equivalence transformation (in fact a simple Galilei transformation $(t,x,u)\to (t,x+a t,u)$ is sufficient to set $a=0$) and an omission of a constraint on one of the coefficients of the generators. Above, in Table \ref{tab1}, we reproduce the group classification table with the necessary corrections made.

\section{Linearizability and Linearization Criteria}\label{linear-crit}
In this Section we state two theorems establishing the relation between the structure of symmetry algebra and linearization.

For the sake of completeness we quote the following Theorem from \cite{Basarab-HorwathGuengoerLahno2013}.

\begin{thm}\label{abeliansymm} All inequivalent, admissible abelian Lie algebras of vector fields of the form $Q=a(t)\gen t + b(t,x,u)\gen x + c(t,x,u)\gen u$ are given as follows:
\begin{align*}
& A=\langle \gen t\rangle,\quad A=\langle \gen u \rangle\;\; (\dim A=1)\\
& A=\langle \gen t, \gen u \rangle,\quad A=\langle \gen x, \gen u \rangle,\quad A=\langle \gen u, x\gen u\rangle\;\; (\dim A=2)\\
& A=\langle \gen t, \gen x, \gen u\rangle,\quad A=\langle \gen t, \gen u, x\gen u \rangle,\quad A=\langle \gen u, x\gen u, c(t,x)\gen u \rangle,\; c_{xx}\neq 0\;\;\\
& (\dim A=3)\\
& A=\langle \gen t, \gen u, x\gen u, c(x)\gen u \rangle,\; c''(x)\neq 0\;\, (\dim A=4)\\
& A=\langle \gen u, x\gen u, c(t,x)\gen u, q_1(t,x)\gen u,\dots, q_k(t,x)\gen u \rangle,\;\;(\dim \mathsf{A}\geq 4)\\
& c_{xx}\neq 0,\, (q_1)_{xx}\neq 0,\dots, (q_k)_{xx}\neq 0.
\end{align*}
\end{thm}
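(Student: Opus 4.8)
The plan is to argue by induction on $\dim A$, at each stage bringing one further generator to a canonical form modulo the subgroup of the equivalence group $\eqg$ of \eqref{main-F-G} that stabilises the part of $A$ already normalised, and then invoking \emph{admissibility} --- that $A$ occurs as a subalgebra of the Lie point symmetry algebra of some equation of the class \eqref{main-F-G}, in the sense of \cite{Basarab-HorwathGuengoerLahno2013} --- to discard the degenerate possibilities. Two properties of $\eqg$ will be used throughout: it contains the time reparametrisations $\tilde t=T(t)$, the $t$-dependent point transformations of the $(x,u)$-plane, and the hodograph interchange of $x$ and $u$, which establishes the equivalence of $\langle\gen x\rangle$ with $\langle\gen u\rangle$; and the $\gen t$-component of a vector field in the given ansatz is multiplied by the nowhere-vanishing factor $T'(t)$ under any such transformation, so that ``$A$ contains an element with nonvanishing $\gen t$-component'' is an $\eqg$-invariant.

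For $\dim A=1$, write $Q=a(t)\gen t+b\gen x+c\gen u\ne0$. If $a\not\equiv0$, reparametrise $t$ so that $a\equiv1$ and take two functionally independent invariants of $Q$ as the new space variables; this rectifies $Q$ to $\gen t$. If $a\equiv0$, then $(b,c)\not\equiv0$, $Q$ is slicewise in $t$ a nonvanishing vector field in the $(x,u)$-plane, and a $t$-dependent point transformation (with a hodograph as well when $b\not\equiv0$) rectifies it to $\gen u$. This gives the two entries of dimension one, and since every abelian algebra contains a subalgebra of codimension one, one may always assume, after an equivalence transformation, that $A'$ below is one of the listed forms.

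For the inductive step, let $A'=\langle Q_1,\dots,Q_{k-1}\rangle$ be one of the listed forms, put in that form by an element of $\eqg$, and let $Q_k$ be the remaining generator, so $[Q_i,Q_k]=0$. Since the $Q_i$ in the list are elementary, these commutators amount to the vanishing of various partial derivatives of the components of $Q_k$; in particular, once $\gen u\in A'$ --- so the $\gen x$- and $\gen u$-components of $Q_k$ are independent of $u$ --- one has $[x\gen u,Q_k]=-b\,\gen u$ with $b$ the $\gen x$-component of $Q_k$, whence $x\gen u\in A'$ forces $b\equiv0$. One then computes the stabiliser of $A'$ in $\eqg$ --- in the relevant cases a group of transformations of the form $\tilde t=T(t)$, $\tilde x=X(t,x)$, $\tilde u=\mu(t,x)u+\nu(t,x)$ --- and uses it to rectify $Q_k$: if its $\gen t$-component is nonzero, $Q_k$ is carried to $\gen t$ after absorbing the remaining terms into $A'$; otherwise $Q_k=c(t,x)\gen u$, and its coefficient can be normalised to $x$ exactly when $c_{xx}=0$ (with $c$ nonconstant) and not at all, beyond an affine change of $x$ and a rescaling, when $c_{xx}\ne0$. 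Feeding the one-dimensional list into this procedure gives the list of dimension two, feeding that in gives the list of dimension three, and so on; along the way $\langle\gen t,\gen x,\gen u\rangle$ turns out to be maximal abelian --- any $Q_4$ commuting with it has constant components --- and the chains $\langle\gen t,\gen u,x\gen u,c(x)\gen u\rangle$ and $\langle\gen u,x\gen u,c(t,x)\gen u,q_1(t,x)\gen u,\dots\rangle$ absorb all the remaining extensions. One also checks the listed forms are pairwise inequivalent, using $\eqg$-invariants such as the rank of the distribution spanned by $A$ and the presence of an element with nonvanishing $\gen t$-component.

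The part that takes real work, and the only place where admissibility is used, is discarding the spurious extensions that the bracket relations alone would permit: a generator $b(t)\gen x+c(t)\gen u$ adjoined to $\langle\gen x,\gen u\rangle$, a generator $q(t,x)\gen u$ with $q_{xx}=0$ but $q$ genuinely $t$-dependent, or a fifth generator $e(x)\gen u$ adjoined to $\langle\gen t,\gen u,x\gen u,c(x)\gen u\rangle$. Here one uses that $\gen u\in A$ forces $F$ and $G$ in \eqref{main-F-G} to be independent of $u$, that $x\gen u\in A$ forces them to be independent of $u_x$ as well, and that then a further symmetry $c(t,x)\gen u$ requires $F=F(t,x)$, $G=g_1(t,x)u_{xx}+g_0(t,x)$ and $c_t-F\,c_{xxx}-g_1\,c_{xx}=0$; this rules out $c_{xx}=0$ with $c$ nonconstant immediately, and --- since adjoining $\gen t$ makes $F$ and $g_1$ independent of $t$, so that $c''$ satisfies a first-order linear ODE and is determined up to a multiplicative constant --- it shows that two independent solutions $c=c(x)$ cannot coexist, so that $\langle\gen t,\gen u,x\gen u,c(x)\gen u\rangle$ is maximal. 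Carrying out this elimination carefully in each branch is the main obstacle; the rectification computations are routine.
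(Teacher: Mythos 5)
The paper does not actually prove Theorem \ref{abeliansymm}: it is quoted verbatim from \cite{Basarab-HorwathGuengoerLahno2013} ``for the sake of completeness'', so there is no in-paper argument to set yours against. That said, your proposal follows exactly the methodology that the cited source relies on and that this paper itself uses in the proof of Theorem \ref{linearizableeqns2}: normalise a codimension-one subalgebra first, compute the residual equivalence group stabilising it, rectify the remaining generator modulo that stabiliser, and invoke the determining equations (admissibility) to discard extensions that commute formally but cannot be symmetries --- the generators $b(t)\gen x+c(t)\gen u$ adjoined to $\langle\gen x,\gen u\rangle$, the genuinely $t$-dependent coefficients with $c_{xx}=0$, and a second independent $c(x)\gen u$ once $\gen t$ is present. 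Your admissibility computations are the correct ones and are the real content of the proof: $\gen u$ kills the $u$-dependence of $F,G$; $x\gen u$ kills the $u_x$-dependence; a further $c(t,x)\gen u$ with $c_{xx}\ne0$ forces $F=F(t,x)$, $G=g_1(t,x)u_{xx}+g_0(t,x)$ and $c_t=Fc_{xxx}+g_1c_{xx}$; and in the $t$-independent case this fixes $c''$ up to a multiplicative constant, which is precisely why $\langle\gen t,\gen u,x\gen u,c(x)\gen u\rangle$ is maximal while the rank-one family without $\gen t$ can grow indefinitely. The invariants you use to separate the listed cases (rank of the spanned distribution, presence of a generator with nonzero $\gen t$-component) are also the right ones.

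One sentence in your rectification step is wrong as written and would derail the dimension-two case if taken literally: you say $Q_k=c(t,x)\gen u$ ``can be normalised to $x$ exactly when $c_{xx}=0$''. When only $\gen u$ has been normalised, the stabiliser still contains arbitrary $\tilde x=X(t,x)$, so any $c$ with $c_x\ne0$ is carried to $x\gen u$ by the choice $\tilde x=c(t,x)$, irrespective of $c_{xx}$ --- this is how $\langle\gen u,x\gen u\rangle$ arises as the unique rank-one two-dimensional entry. The dichotomy $c_{xx}=0$ versus $c_{xx}\ne0$ becomes operative only after $x\gen u$ has been fixed, at which point the residual group acts on $x$ essentially affinely and a coefficient with $c_{xx}\ne0$ can no longer be simplified. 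Since your final lists agree with the theorem you are evidently applying the correct stabiliser at each stage, but the normalisation criterion must be stated per stage rather than uniformly; with that repaired the proposal is sound.
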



We also have the following result, which was central to identifying realizations of solvable algebras which lead to linear(izable) equations. It allowed us to omit many of the cases of 5-dimensional solvable Lie algebras in Mubarkzianov's list \cite{Basarab-HorwathGuengoerLahno2013a}.

\begin{thm}\label{linearizableeqns2}
If the evolution equation \eqref{main-F-G} admits a rank-one solvable Lie algebra $\mathsf A=\langle e_1,e_2,e_3\rangle$ of dimension three or an abelian Lie algebra $\mathsf{A}$ with $\dim\mathsf{A}\geq 4$ as symmetries then the equation is linearizable.
\end{thm}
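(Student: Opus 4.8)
The plan is to isolate one computational lemma about ``vertical'' symmetries and then reduce both hypotheses to its range.

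\emph{Computational core.} Every Lie point symmetry of \eqref{main-F-G} has the form $Q=a(t)\gen{t}+b(t,x,u)\gen{x}+c(t,x,u)\gen{u}$. For a translation $Q=f(t,x)\gen{u}$, prolonging to third order, applying to $u_t-Fu_{xxx}-G$ and substituting $u_t=Fu_{xxx}+G$ on the solution manifold produces an expression that is polynomial of degree one in $u_{xxx}$ (which occurs in neither $F$ nor $G$); its two coefficients give
\[
fF_u+f_xF_{u_x}+f_{xx}F_{u_{xx}}=0,\qquad f_t-Ff_{xxx}=fG_u+f_xG_{u_x}+f_{xx}G_{u_{xx}} .
\]
A scaling $Q=u\gen{u}$ forces $F$ homogeneous of degree $0$ and $G$ of degree $1$ in $(u,u_x,u_{xx})$. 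Now suppose the equation admits three translations $f_i\gen{u}$ whose coefficient vectors $(f_i,f_{i,x},f_{i,xx})$ are linearly independent at generic $(t,x)$, i.e. whose $x$-Wronskian is not identically zero. The first relation then forces $F_u=F_{u_x}=F_{u_{xx}}=0$, so $F=F(t,x)$; differentiating the second relation with respect to each of $u,u_x,u_{xx}$ and using the same independence forces all second derivatives of $G$ in $(u,u_x,u_{xx})$ to vanish, so $G=P_0(t,x)+P_1u+P_2u_x+P_3u_{xx}$, and $u_t=F(t,x)u_{xxx}+P_3u_{xx}+P_2u_x+P_1u+P_0$ is linear. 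The identical conclusion holds if the equation admits $u\gen{u}$ together with two translations $f_1\gen{u},f_2\gen{u}$ in general position, the homogeneity relations now supplying the missing vector $(u,u_x,u_{xx})$ (and forcing $P_0=0$).

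\emph{Abelian case.} By Theorem \ref{abeliansymm}, an admissible abelian algebra of dimension $\ge4$ is, after an equivalence transformation, $\lr{\gen{t},\gen{u},x\gen{u},c(x)\gen{u}}$ with $c''\ne0$ or $\lr{\gen{u},x\gen{u},c(t,x)\gen{u},q_1\gen{u},\dots}$ with $c_{xx}\ne0$; in either case it contains the translations $\gen{u},x\gen{u},h\gen{u}$ with $h=c$ and $h_{xx}\ne0$, whose coefficient vectors $(1,0,0),(x,1,0),(h,h_x,h_{xx})$ have determinant $h_{xx}\ne0$. The core applies, so the transformed — hence the original — equation is linear.

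\emph{Rank-one solvable case.} The three generators of $\mathsf A$ are pointwise proportional, so they span a line field generated by a single vector field $e$. If $e$ had a nonvanishing $\gen{t}$-component, an admissible transformation straightens it to $\gen{\tilde t}$ and turns every generator into $\rho(\tilde t,\tilde x,\tilde u)\gen{\tilde t}$; the fifth-order $\tilde x$-derivative term of the symmetry condition for $\rho\gen{\tilde t}$ forces $\rho=\rho(\tilde t)$, so $\mathsf A$ would be a three-dimensional Lie algebra of vector fields on the $\tilde t$-line, necessarily $\Sl(2,\R)$ by Lie's classification — impossible for a solvable algebra. Hence $e=b(t,x,u)\gen{x}+c(t,x,u)\gen{u}$, and since $e$ has no $\gen{t}$-component an admissible transformation $\tilde t=t$, $\tilde x=\tilde x(t,x,u)$, $\tilde u=\tilde u(t,x,u)$ — such transformations belong to the equivalence group of \eqref{main-F-G}, which includes hodograph-type maps — straightens $e$ to $\gen{\tilde u}$. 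Thus $\mathsf A=\lr{\rho_1\gen{\tilde u},\rho_2\gen{\tilde u},\rho_3\gen{\tilde u}}$, and since a three-dimensional Lie algebra of vector fields on the $\tilde u$-line is again $\Sl(2,\R)$, a further admissible change $\tilde u\mapsto v(\tilde t,\tilde x,\tilde u)$ makes each $\rho_i$ affine in $v$: $\rho_i=\alpha_i(\tilde t,\tilde x)+\beta_i(\tilde t,\tilde x)v$. If all $\beta_i=0$, $\mathsf A$ is abelian and, by Theorem \ref{abeliansymm}, equals $\lr{\gen{v},x\gen{v},c(t,x)\gen{v}}$ with $c_{xx}\ne0$, reducing to the abelian case. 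Otherwise $[(\alpha_i+\beta_iv)\gen{v},(\alpha_j+\beta_jv)\gen{v}]=(\alpha_i\beta_j-\alpha_j\beta_i)\gen{v}$ is a translation, so a shift $v\mapsto v+\varphi(\tilde t,\tilde x)$ and a change of basis bring $\mathsf A$ to $\lr{v\gen{v},\psi_1(\tilde t,\tilde x)\gen{v},\psi_2(\tilde t,\tilde x)\gen{v}}$ with $\psi_1,\psi_2$ in general position. In both sub-cases the computational core yields linearity.

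\emph{Where the difficulty sits.} The abelian case and the reduction to purely vertical generators are routine; the real work is the last step of the rank-one solvable case — obtaining the affine-in-$v$ normal form and, above all, verifying the general-position (nonvanishing $x$-Wronskian) hypotheses, since the degenerate alternatives are not excluded by dimension count and must be eliminated one at a time by exploiting $f_t-Ff_{xxx}=fG_u+f_xG_{u_x}+f_{xx}G_{u_{xx}}$ (and, when $v\gen{v}$ is present, the homogeneity relations) to contradict $\dim\mathsf A=3$. One must also keep track that every straightening step uses only an admissible transformation of \eqref{main-F-G}.
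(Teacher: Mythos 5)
Your argument is sound in outline but follows a genuinely different route from the paper's, and in one place it defers exactly the work the paper actually performs. The paper's proof is short because it leans on machinery from the earlier classification papers: for the abelian case it simply invokes Theorem \ref{abeliansymm}, and for the non-abelian rank-one case it takes the canonical rank-one realizations of the two-dimensional subalgebra ($\lr{\gen u,u\gen u}$ or $\lr{\gen u,x\gen u}$) as given, extends to $e_3$ by the commutation relations, normalizes $e_3$ with the residual equivalence group $\mathscr{E}(e_1,e_2)$, and lands on $\lr{\gen u,u\gen u,x\gen u}$, whose linearizing effect is asserted rather than recomputed. You instead (i) make the final linearity step fully explicit through the ``computational core'' --- the splitting of the determining equation for a vertical symmetry $f(t,x)\gen u$ into $fF_u+f_xF_{u_x}+f_{xx}F_{u_{xx}}=0$ and $f_t-Ff_{xxx}=fG_u+f_xG_{u_x}+f_{xx}G_{u_{xx}}$, plus the Wronskian/general-position argument --- which is a clean, self-contained lemma the paper leaves implicit; and (ii) re-derive the normal forms from scratch via straightening the rank-one line field and Lie's classification of vector fields on the line, rather than quoting canonical realizations. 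Your elimination of a $\gen t$-component via non-solvability of $\Sl(2,\R)$ is correct (and is more easily justified by the known form $Q=a(t)\gen t+b\gen x+c\gen u$ of admissible symmetries than by the fifth-order-term argument you sketch). What each approach buys: yours is more portable and makes the theorem verifiable without the companion papers; the paper's is shorter and automatically inherits the smooth-in-parameters normalization of the realizations.

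The one substantive reservation: in the rank-one solvable case you explicitly postpone ``verifying the general-position hypotheses'' and ``obtaining the affine-in-$v$ normal form,'' and this is precisely where the paper's proof does its only real computation. The paper shows that if $e_3=\gamma(t)\gen u$ (the degenerate, $x$-independent case) then the determining equation forces $\dot\gamma=0$, collapsing $\dim\mathsf A$ below three --- i.e.\ the degenerate alternative contradicts the hypothesis, not the conclusion. Your second determining relation delivers the same contradiction ($f=\phi(t)$ gives $\phi'=\phi G_u$ with $G_u$ already pinned down by $\gen v$, forcing $\phi'=0$), so the gap is fillable by the method you name; likewise the affine-in-$v$ normalization requires checking that the conjugating diffeomorphism of the $v$-line can be chosen smoothly in $(\tilde t,\tilde x)$ and realized by an admissible equivalence transformation. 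As written, though, these steps are announced rather than carried out, so the rank-one half of your proof is an essentially correct programme with its hardest verification still owed.
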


\begin{proof}
If the algebra $\mathsf{A}$ is abelian then the Theorem \eqref{abeliansymm} gives the result.

If $\mathsf{A}$ is not abelian then we may assume a chain of ideals
$$\langle e_1\rangle \lhd \langle e_1, e_2\rangle  \lhd  \langle e_1,e_2,e_3\rangle.$$
Now, $\langle e_1, e_2\rangle $ is either an abelian or it is a non-abelian solvable algebra.  If $\langle e_1, e_2\rangle $ is nonabelian, then we may take  $[e_1,e_2]=e_1$ and in canonical form  we have $\langle e_1, e_2\rangle =\langle \gen u, u\gen u\rangle $ since the realization is to be rank-one (this is the canonical form of the rank-one realization of the solvable Lie algebra $\mathsf{A}_{2.2}$). We then  take $e_3=c(t,x,u)\gen u$ so that
$$[e_1,e_3]=\alpha e_1+ \beta e_2,  \quad [e_2,e_3]=\gamma e_1+ \delta e_2.$$
Elementary calculations give $c(t,x,u)\gen u=\alpha e_2+\kappa e_1+\gamma(t,x)\gen u$ so that we have $\mathsf{A}=\langle\gen u, u\gen u, \gamma(t,x)\gen u\rangle$. If $\gamma_x=0$ then we have $e_3=\gamma(t)\gen u$ and this together with $\gen u$ gives $\dot{\gamma}=0$ in the determining equation for $G$ of \eqref{main-F-G} so that we must have $\gamma_x\neq 0$ for $\dim \mathsf{A}=3$. In this case, we use the residual equivalence group
$$\mathscr{E}(e_1,e_2): t'=T(t), \quad x'=X(t,x), \quad  u'=u, \qquad X_x \ne 0.$$
Under such a transformation $\gamma(t,x)\gen u$ is mapped to $\gamma(t,x)\gen {u'}$. Since $\gamma_x\ne 0$ we may take $x'=X(t,x)=\gamma(t,x)$ so that we have  $e_3=x\gen u$ in canonical form. Hence, $\mathsf{A}=\langle\gen u, u\gen u, x\gen u\rangle$ and this linearizes the equation.

Then we take $\langle e_1, e_2\rangle$ as the abelian ideal $ \langle \gen u, x\gen u \rangle $ in canonical form. Putting $e_3=c(t,x,u)\gen u$, the same sort of reasoning as above leads to $e_3=(\alpha u+\gamma(t,x))\gen u$. We can set $\gamma\to 0$ by the residual equivalence group $\mathscr{E}(e_1,e_2): t'=T(t), x'=x, u'=u+U(t,x)$ so that
we  again have $\mathsf{A}=\langle\gen u, x\gen u, u\gen u\rangle$.

\end{proof}

\begin{exa}
$$u_t=u_x^{-3}u_{xxx}-3u_x^{-4}u_{xx}^2$$
It admits a rank-one solvable subalgebra of dimension three
$\langle\gen x, x\gen x, u\gen x\rangle$. So by the above theorem  linearizable: The subalgebra is transformed to the canonical form $\mathsf{A}=\langle\gen u, u\gen u, x\gen u\rangle$ by the hodograph transformation (an equivalence transformation) $$(t,x,u)\to (t,u,x)$$ with the following induced transformations of the derivatives:
$$u_t\to -\frac{u_t}{u_x},  \quad u_x\to \frac{1}{u_x},  \quad u_{xx}\to \frac{u_{xx}}{u_x^3},  \quad u_{xxx}\to -\frac{u_{xxx}}{u_x^4}+3\frac{u_{xx}^2}{u_{x}^5},$$
the equation linearizes to $u_t=u_{xxx}$.

The full symmetry algebra is spanned by the infinite-dimensional algebra
$$\langle\gen t, \gen x, x\gen x, t\gen t+\frac{u}{3}\gen u\rangle \vartriangleright \langle\mathbf{v}(\gamma)\rangle,$$ where
$$\mathbf{v}(\gamma)=\gamma(t,u)\gen x,  \quad \gamma_t=\gamma_{uuu}.$$
The presence of the abelian ideal $\mathbf{v}(\gamma)$ suggests the hodograph transformation.
\end{exa}

These theorems are useful in checking the linearity of a given equation with a special symmetry group. However, even if the linearity test is passed it might not be always possible to find a linearizing transformation without knowing the maximal symmetry group of the equation.

\section{A class of linearizable evolution equations}\label{lin-evol-eqs}
We consider a subclass of \eqref{main-F} where $F$ is a polynomial in $u_x^3, u_x^2, u_{xx}, u_{x}$
\begin{equation}\label{class}
  F=3f(u)u_xu_{xx}+g(u)u_x^3+h(u)u_x^2+p(u)u_{xx}+q(u)u_x,
\end{equation}
where $f, g, h, p, q$ are arbitrary smooth functions of their arguments. We shall investigate linearizability of $u_t=u_{xxx}+F$ from different aspects.

\subsection{Presence of infinite-dimensional symmetry algebra}
We know from \cite{GungorLahnoZhdanov2004} that the Lie symmetry algebra is generated by the vector field
\begin{equation}\label{gvf}
X = \tau (t)\partial _t  + (\frac{{\dot \tau }} {3}x + \rho
(t))\partial _x  + \phi (t,x,u)\partial _u,
\end{equation}
where $\tau(t)$, $\rho(t)$,  $\phi(t,x,u)$ and $F$
satisfy a single determining equation (see Proposition 2.1 of \cite{GungorLahnoZhdanov2004}). Eq. \eqref{class} is invariant under the abelian subalgebra $\langle \partial_t, \partial_x\rangle$. We want to force  Eq.\eqref{class} to be invariant under at least an infinite-dimensional subalgebra depending on an arbitrary function $\varrho(t,x)$ satisfying a linear equation in normal form \eqref{linear-normal-2}.

We substitute $F$ into the determining equation and split it with respect to the linearly independent derivatives  $u_x^3, u_x^2, u_{xx}, u_{x}$ and then choose $\tau=0$, $\rho=0$ and $\phi=\varrho(t,x)k(u)$ in \eqref{gvf}, where $k(u)$ is to be determined in such a way that $\varrho(t,x)$  remains arbitrary
and satisfies  \eqref{linear-normal-2}. This requirement gives us the following relations for the coefficients of \eqref{class} and $k(u)$:
\begin{eqnarray}\label{deteqs}
   &&k'+fk=0,  \quad h=pf, \\
   &&k''+2fk'+gk=0,  \quad k'''+3fk''+2gk'+kg'=0,\\
   &&f'+f^2-g=0,  \quad f''-g'+2fg-2f^3=0,
\end{eqnarray}
where $p$ is a constant. It also follows that $q$ must be constant. Note that the equation $f''-g'+2fg-2f^3=0$ is just the first differential consequence of $f'+f^2-g=0$.
The consistency of the above system involving $f, g, k$ and derivatives boils down to a single condition
\begin{equation}\label{Riccati}
  f'+ f^2=g
\end{equation}
and $k$ and $h$ are determined from the first set of ODEs of   \eqref{deteqs}. We note that the equations $(4.5)$ also appeared in \cite{SandersWang1998} as the condition of existence of a Recursion operator for the corresponding system. The fact that the conditions $$f'+f^2-g=0,  \quad f''-g'+2fg-2f^3=0$$
are in fact just equation \eqref{Riccati} was omitted there. Throughout we shall assume $f\not=0$.  For a given $g$, Eq. \eqref{Riccati} can be regarded as a Riccati equation. The transformation $f=\sigma'/\sigma$ linearizes  \eqref{Riccati} to $\sigma''-g(u)\sigma=0$. Thus, for arbitrary $f$ we can generate a family of linearizable class of PDEs depending on $f(u)$. The symmetry vector field now has the form
\begin{equation}\label{inf-vf}
\mathbf{v}(\varrho)=\varrho(t,x)k(u)\gen u,  \quad k(u)=e^{-\int f(u)du},
\end{equation}
where $\varrho$ satisfies the linear equation
\begin{equation}  \label{linear-normal-3}
\varrho_t = \varrho_{xxx} +a \varrho_{xx} +b \varrho_x,
\end{equation}
where $a$, $b$ are arbitrary constants. Note that $b$ can be set to zero by the Galilean transformation $(t,x,u)\to (t,x+at,u)$ so there is no loss of generality in putting $b=0$ in the subsequent analysis.   We shall call \eqref{Riccati} a linearizability condition. Under this condition, the transformation
\begin{equation}\label{lin-trans}
w=\int\left(e^{\int f(u)du}\right)du=\int\sigma(u)du
\end{equation}
takes \eqref{inf-vf} to $\mathbf{\tilde{v}}(\varrho)=\varrho(t,x)\gen w$ and Eq. \eqref{class} to the linear equation
\begin{equation}\label{linear}
  w_t=w_{xxx}+aw_{xx}.
\end{equation}
The corresponding PDE has the form
\begin{equation}\label{class-f-g}
  u_t=u_{xxx}+3f(u)u_xu_{xx}+g(u)u_x^3+a(f(u)u_x^2+u_{xx})
\end{equation}
with the constraint $f'+f^2-g=0$ satisfied.
In particular, if $g=0$ we find $f(u)=1/(u-C)$, $C$ a constant and the linearizing transformation is given by $u=\pm\sqrt{2w}+C$.
\begin{rmk}
  Alternatively, we may look for a transformation $u=\Phi(w)$, $\Phi'\ne 0$. Substituting into \eqref{class}  we find
$$w_t=w_{xxx}+3[\frac{\Phi''}{\Phi'}+f \Phi']w_x w_{xx}+[\frac{\Phi'''}{\Phi'}+3f{\Phi''}+g{\Phi'}^2]w_x^3+[p\frac{\Phi''}{\Phi'}+h \Phi']w_x^2+pw_{xx}+qw_x.$$
For linearity we choose $(p,q)=(a,b)$ being constants and
$$\frac{\Phi''}{\Phi'}+f \Phi'=0,  \quad p\frac{\Phi''}{\Phi'}+h \Phi'=0,  \quad  \frac{\Phi'''}{\Phi'}+3f{\Phi''}+g{\Phi'}^2=0.$$
It is useful to express the derivatives of $\Phi$ in terms of its inverse function $w=\hat{\Phi}(u)=\Phi^{-1}(u)$ using $\Phi'(w)=1/\hat{\Phi}'(u)$. From the first two relations we find $h=pf$. The first relation can be written as   $\hat{\Phi}''(u)/\hat{\Phi}'(u)=f(u)$. Integration of this relation gives the linearizing transformation \eqref{lin-trans}. Differentiating the first relation and eliminating derivatives of $\Phi$  provides us the linearizability condition $f'+f^2=g$.
The resulting linear equation is \eqref{linear} for $b=0$.
\end{rmk}

Thus we have the following result:

\begin{thm}\label{one-family}
  The class of equations depending on an arbitrary function $f(u)$ and an arbitrary constant $a$
\begin{equation}\label{subclass}
  u_t=u_{xxx}+3f(u)u_xu_{xx}+(f'(u)+f(u)^2)u_x^3+a(f(u)u_x^2+u_{xx}),
\end{equation}
is linearizable (or S-integrable) by a change of dependent variable and the linearizing transformation is given by  \eqref{lin-trans}. The recursion operator for \eqref{subclass} \cite{SandersWang1998} is
\begin{equation}\label{recursion}
  \mathfrak{R}=(D_x+f(u)u_x)(D_x+f(u)u_x)=D_x^2+2f(u)u_xD_x+(f'+f^2)u_x^2+f(u)u_{xx}.
\end{equation}
\end{thm}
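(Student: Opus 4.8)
The plan is to verify the claim of Theorem \ref{one-family} directly by exhibiting the change of variables and checking that it does what it should, since all the ingredients have in fact been assembled in the preceding discussion. The statement has three parts: (i) equation \eqref{subclass} is the specialization of \eqref{class-f-g}/\eqref{class} under the constraint $g=f'+f^2$; (ii) the substitution \eqref{lin-trans} carries it to the linear equation \eqref{linear} with $b=0$; and (iii) the operator $\mathfrak R$ in \eqref{recursion} is a recursion operator. For (i) I would simply substitute $g=f'(u)+f(u)^2$ and $h=pf$, $p=a$, $q=b=0$ into \eqref{class} and observe that the result is \eqref{subclass}; this is the content of \eqref{class-f-g} specialized by the linearizability condition \eqref{Riccati}, so nothing new is required.

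For part (ii), the key computation is the transformation of derivatives under $w=\int\sigma(u)\,du$ with $\sigma(u)=e^{\int f(u)\,du}$, equivalently $w=\hat\Phi(u)$ with $\hat\Phi''/\hat\Phi'=f$. First I would record $w_x=\sigma u_x$, then differentiate to get $w_{xx}=\sigma u_{xx}+\sigma' u_x^2=\sigma(u_{xx}+f u_x^2)$ using $\sigma'=f\sigma$, and once more $w_{xxx}=\sigma\big(u_{xxx}+3f u_x u_{xx}+(f'+f^2)u_x^3\big)$, again collapsing terms via $\sigma'=f\sigma$, $\sigma''=(f'+f^2)\sigma$. Adding $a$ times the expression for $w_{xx}$ then gives exactly $\sigma$ times the right-hand side of \eqref{subclass}, while $w_t=\sigma u_t$; dividing by $\sigma\neq 0$ yields $w_t=w_{xxx}+a w_{xx}$, which is \eqref{linear}. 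This is the calculation already sketched in the Remark preceding the theorem, so I would present it compactly and refer back. The inverse substitution is $u=\Phi(w)$ with $\Phi=\hat\Phi^{-1}$, and for the $g=0$ subcase noted in the statement one has $f=1/(u-C)$, $\sigma=u-C$, $w=\tfrac12(u-C)^2$, giving $u=C\pm\sqrt{2w}$.

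For part (iii), the cleanest route is to invoke the result of \cite{SandersWang1998}: under precisely the condition \eqref{Riccati} the system admits the recursion operator \eqref{recursion}, and we have shown \eqref{Riccati} is exactly the constraint defining \eqref{subclass}. Alternatively, and more self-containedly, one can note that the substitution $w=\hat\Phi(u)$ conjugates the total derivative operator: since $w_x=\sigma u_x$ one checks that $D_x+f u_x$ is the pullback of $D_x$ under $u\mapsto w$, i.e. $\sigma^{-1}D_x\sigma = D_x + (\sigma'/\sigma) = D_x + f u_x$ acting appropriately, so $\mathfrak R=(D_x+f u_x)^2$ is the pullback of the trivial recursion operator $D_x^2$ of the linear equation $w_t=w_{xxx}+aw_{xx}$, hence a recursion operator for \eqref{subclass}. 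Expanding $(D_x+f u_x)^2 = D_x^2 + 2f u_x D_x + (f' u_x^2 + f u_{xx}) + f^2 u_x^2$ and grouping the middle terms as $(f'+f^2)u_x^2 + f u_{xx}$ reproduces the second form in \eqref{recursion}.

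The main obstacle is essentially bookkeeping rather than conceptual: one must be careful that every appearance of $\sigma'$, $\sigma''$ is correctly replaced using $\sigma'=f\sigma$ and $\sigma''=(f'+f^2)\sigma$ so that the factor of $\sigma$ cleanly factors out of all of $w_x, w_{xx}, w_{xxx}$ simultaneously — it is precisely the linearizability condition $g=f'+f^2$ that makes the $u_x^3$ coefficient match, and this coincidence is the one place where the hypothesis is genuinely used. A secondary point worth a line is noting, as the text already does, that the constant $b$ may be normalized to zero by the Galilean shift $(t,x,u)\mapsto(t,x+at,u)$, so that restricting to $b=0$ in \eqref{linear} costs no generality.
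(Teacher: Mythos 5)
Your proposal is correct and follows essentially the same route as the paper: the direct computation of $w_x$, $w_{xx}$, $w_{xxx}$ under $w=\int\sigma(u)\,du$ with $\sigma'=f\sigma$, $\sigma''=(f'+f^2)\sigma$ is exactly the content of the Remark preceding the theorem (there carried out in the inverse direction $u=\Phi(w)$), and it correctly isolates where the hypothesis $g=f'+f^2$ enters. The one place you go beyond the paper is the conjugation identity $\sigma^{-1}D_x\,\sigma=D_x+f(u)u_x$, exhibiting $\mathfrak{R}=(D_x+f(u)u_x)^2$ as the pullback of the trivial recursion operator $D_x^2$ of the linear equation; the paper simply cites \cite{SandersWang1998} for this, so your self-contained justification is a correct and worthwhile addition.
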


The three-parameter family of  equations
\begin{equation}\label{special-pde}
 u_t=u_{xxx}+\frac{3\alpha}{u}u_xu_{xx}+\frac{\beta}{u^2}u_x^3+a\left(\frac{\alpha}{u}u_x^2+u_{xx}\right),
\end{equation}
invariant under $\langle\gen t, \; \gen x, \;  u\gen u\rangle$
deserves a special attention.

There are two cases.

\par I.)
If $\beta=\alpha(\alpha-1)$ the linearity condition \eqref{Riccati} is satisfied.
The corresponding infinite dimensional symmetry algebra is generated by
$$\langle\gen t, \; \gen x, \; 3t\gen t+x\gen x, \; u\gen u\rangle \vartriangleright \langle \varrho(t,x) u^{-\alpha}\gen u\rangle,$$ where $\varrho$ solves $\varrho_t = \varrho_{xxx}+a\varrho_{xx}$.
For $\alpha\ne -1$ the transformation formula \eqref{lin-trans} gives the linearizing transformation $u=w^{1/(\alpha+1)}$ by scale invariance in $u$.

When $\alpha=-1$ ($\beta=2$) we have $w=\ln u$ (or $u=e^w$) as the linearizing transformation.

For $\alpha=\beta=2$ the linearizing transformation for \eqref{special-pde} is $w=u^3$. The second possibility $w=\ln u$ maps it to a linearizable equation in $w$ within the  class \eqref{class-f-g} with $f=3$, $g=9$. The corresponding linearizing transformation is $w=1/3\ln(3\tilde{w})$ which results in the previous transformation $w=u^3$.

Applying the transformation $u=e^w$  to \eqref{special-pde} and taking $\beta=\alpha(\alpha-1)$ we find a one-parameter family of equations  in the class \eqref{subclass} where $f=\alpha+1$
\begin{equation}\label{w-invar-pde}
  w_t=w_{xxx}+3(\alpha+1)w_xw_{xx}+(\alpha+1)^2w_x^3+a[(\alpha+1)w_x^2+w_{xx}].
\end{equation}
Eq. \eqref{w-invar-pde} is invariant under the infinite-dimensional symmetry algebra generated by the vector fields
\begin{equation}\label{inf-span}
  \langle\gen t, \gen x, 3t\gen t+x\gen x, \gen w,  e^{-(\alpha+1)w} \varrho(t,x)\gen w\rangle,
\end{equation}
where
$\varrho$ is an arbitrary solution of the linear equation
$\varrho_t=\varrho_{xxx}.$
Differentiating \eqref{w-invar-pde} with respect to $x$ and writing $\phi=w_x=u_x/u$ gives us
\begin{equation}\label{gen-STO}
  \phi_t=\phi_{xxx}+\frac{3(\alpha+1)}{2}(\phi^2)_{xx}+(\alpha+1)^2(\phi^3)_x
  +a[(\alpha+1)\phi^2+\phi_x]_x.
\end{equation}
For $\alpha=0$, $a=0$, Eq. \eqref{gen-STO}  is known as the Sharma-Tasso-Olver (STO) equation (see example \ref{STO}). The image of $\gen w$ is $\gen \phi$ under $\phi=w_x$. The image of $e^{-w} \varrho(t,x)\gen w$ is more complicated (which depends on the nonlocal variable $w=\partial_x^{-1}\phi$ as well) and obtained by calculating its first prolongation and then replacing $w_x$ by $\phi$
\begin{equation}\label{nlocal}
  \tilde{\mathbf{v}}(\varrho)=-e^w(\varrho \phi-\varrho_x)\gen \phi.
\end{equation}
This provides an infinite-dimensional nonlocal symmetry of the STO equation which is invariant under only a finite dimensional symmetry algebra with basis $\langle\gen t, 3t\gen t+x\gen x, \gen \phi\rangle$.

\par II.) The second case for $\beta$ is $\beta\not=\alpha(\alpha-1)$. We restrict to the case $a=0$. This is a non-linearizable case (Lie point symmetry algebra is finite-dimensional).
First we note that the transformation $u=w^m$ does not change the form of the equation, but it maps the parameters as follows
\begin{equation}\label{shifted-param}
  \alpha\to \tilde{\alpha}=m(\alpha+1)-1,  \quad \beta\to \tilde{\beta}=(3\alpha+\beta+1)m^2-3(\alpha+1)m+2.
\end{equation}
The pairs of parameters $(\alpha,\beta)=(-1,2)$ for any $m$ and $(\alpha,\beta)=(-1,\beta)$, $\beta\ne 2$ for $m=-1$  remain unchanged under this transformation, which hence defines a discrete transformation in these special cases. One can make one of the parameters zero by an appropriate choice of $m$.
Choosing $m=\pm[2/(2-\beta)]^{1/2}$  we find that the case $\alpha=-1$, $\beta\ne 2$ is equivalent to  $\alpha=-1$, $\beta=0$. This means that we can put $\beta=0$ without loss of generality in \eqref{special-pde} and obtain
\begin{equation}\label{beta-not-2}
  u_t=u_{xxx}-3\frac{u_x u_{xx}}{u}.
\end{equation}
Now, expressing \eqref{beta-not-2}
in terms of the differential invariant (using the scale invariance in $u$) $w=u_x/u$ (which is actually a Cole-Hopf transformation)
maps the equation to the modified KdV (mKdV) equation
$$w_t=w_{xxx}-6w^2w_x.$$  It is well-known that the Miura transformation $\psi=w^2+ w_x$ maps $w_t=w_{xxx}-6w^2w_x$ to the KdV equation $\psi_t=\psi_{xxx}-6ww_x$. The final transformation mapping \eqref{beta-not-2}
to the  KdV equation is given by
$\psi=u_{xx}/{u}$.
In a recent paper \cite{Carillo2017}, equation \eqref{beta-not-2} has been presented as a novel KdV-type third order evolution equation. In this work, the the Cole-Hopf link between  Eq. \eqref{beta-not-2} and the mKdV equation has been reestablished and this fact has also been  used as an alternative method to prove that a recursion operator exists for   \eqref{beta-not-2}. Also, a nonlocal invariance of \eqref{beta-not-2}, not treated in the present paper, has been constructed using directly the composition of the M\"obius transformation \eqref{Mobius} with Baecklund transformations  connecting \eqref{beta-not-2} to the Schwarzian-KdV  equation \eqref{SKdV} (see the last paragraph of Example \eqref{non-lin}).

When $\alpha\ne -1$ we choose $m=(\alpha+1)^{-1}$ to make $\tilde{\alpha}=0$ in which case $\tilde{\beta}=(\alpha+1)^{-1}[\beta-\alpha(\alpha-1)]$. This implies that the case $\alpha\ne -1$ is equivalent to $\alpha=0$. When $\alpha\ne -1$, $\beta\ne 2$ we can always set $\beta\to 0$. In the former case, by the substitution $w=u_x/u$ the corresponding equation
$$u_t=u_{xxx}+\beta u^{-2}u_x^3$$ is reduced  to
$$w_t=w_{xxx}+\frac{3}{2}(w^2)_{xx}+(\beta+1)(w^3)_x.$$ This is the STO equation for $\beta=0$ which establishes once again its connection with the linear KdV equation (see also example \ref{STO}).

The equation
\begin{equation}\label{special-pde-c}
  u_t=u_{xxx}+c\left(2\frac{u_x u_{xx}}{u}-\frac{u_x^3}{u^2}\right).
\end{equation}
is non-linearizable for arbitrary $c\neq 0, -3/4$:
the linearity condition is satisfied only for $c=-3/4$ ($\alpha=-1/2$). The linearizing transformation is $w=\sqrt{u}$.
For $c=-3/2$ ($\alpha=-1$, $\beta=3/2$) it is equivalent to \eqref{beta-not-2} by the substitution $u=w^2$ (see example \ref{skdv-ex}). In other words, the corresponding equation can be mapped to mKdV.

Although equation \eqref{class-f-g} is non-linearizable when $g\ne f'+f^2$, if it is invariant under translations in $u$ (so $f, g$ are independent of $u$) or in general admits at least one symmetry of the form $X=\xi(t,x,u)\gen x+\phi(t,x,u)\gen u$, then the equation can be  differentiated with respect to $x$ so that the (non-point) transformation $v=u_x$ (a differential invariant of the translational symmetry $\gen u$) takes it to one which can either be linearized or mapped to the KdV equation.

\begin{exa}
\begin{equation}\label{ex-c}
  u_t=u_{xxx}+c u_x^{-1}u_{xx}^2.
\end{equation}
This equation admits a finite dimensional maximal symmetry algebra spanned by
$$\langle\gen t, \gen x, \gen u, t\gen t+\frac{x}{3}\gen x, u\gen u\rangle.$$
Since equation \eqref{ex-c} is invariant under $\gen u$, the differential transformation $(t,x,u)\to (t,x,u_x)$, with $v=u_x$,   transforms it to the form
$$v_t=v_{xxx}+c\left(2\frac{v_x v_{xx}}{v}-\frac{v_x^3}{v^2}\right),$$
which is exactly \eqref{special-pde-c} for $c=-3/4$ (the linearizable case).
In other words, it can admit an infinite-dimensional point symmetry group
$$\mathbf{v}(\gamma)=\gamma(t,x)\sqrt{v}\gen v, \quad \gamma_t=\gamma_{xxx}$$
in addition to the finite-dimensional symmetry algebra
$$\langle\gen t, \gen x, \gen v, t\gen t+\frac{x}{3}\gen x, v\gen v\rangle.$$
So the transformation $w=\sqrt{v}$ taking $\mathbf{v}(\gamma)$ to $\gamma\gen w$ linearizes
$$v_t=v_{xxx}-\frac{3}{4}\left(2\frac{v_x v_{xx}}{v}-\frac{v_x^3}{v^2}\right).$$

Finally, the nonlocal transformation $w=\sqrt{u_x}$ maps
\begin{subequations}\label{equi-KdV}
  \begin{equation}
  u_t=u_{xxx}-\frac{3}{4} u_x^{-1}u_{xx}^2
\end{equation}
to
\begin{equation}
  w_t=w_{xxx}.
\end{equation}
\end{subequations}

\end{exa}

\begin{rmk}
  A more general approach to producing nonlinear evolution equations having nonlocal or more precisely quasi-local symmetries from those equations admitting Lie point symmetries of the form
$$X=\xi(t,x,u)\gen x+\phi(t,x,u)\gen u$$ can be found in \cite{Zhdanov2010}. Note that we already know that for any third-order evolution equation from our class (see for example \cite{Basarab-HorwathGuengoerLahno2013}) this vector field can be mapped to the normal form $\gen u$. In other words, a point transformation exists taking the equation to one which is independent of the dependent variable $u$. So a sequence of operations reducing vector fields to normal form and applying the substitution $(t,x,u)\to (t,x,u_x)$  in a finite number of steps may lead to equations with nonlocal (or quasi-local) symmetries.
\end{rmk}

\begin{exa}\label{skdv-ex}
The integrable Schwarzian-KdV (SKdV) equation
\begin{equation}\label{SKdV}
  u_t=u_{xxx}+c u_x^{-1}u_{xx}^2,  \quad c=-\frac{3}{2}
\end{equation}
admits a six-dimensional symmetry algebra with the structure
$\Sl(2,\mathbb{R})\oplus \sf{A}_3$ (see for example
\cite{GungorLahnoZhdanov2004}).

This equation was discovered by J. Weiss \cite{Weiss1983} (a subcase of the Krichever-Novikov equation) as the singular manifold equation for the KdV and mKdV equations.

The symmetry group of the SKdV equation is composed of the M\"obius transformations of the dependent variable $u$
\begin{equation}\label{Mobius}
  u'=\frac{au+b}{cu+d}, \quad ad-bc\ne 0,
\end{equation}
and simultaneous rescaling of $x$ and $t$ and  translations in $x$ and $t$: $x'=\alpha x+\beta$,  $t'=\alpha^3 t+\gamma,$ $\alpha>0$.

Applying the transformation $v=u_x$ to the equation gives
$$v_t=v_{xxx}-\frac{3}{2}\left(2\frac{v_x v_{xx}}{v}-\frac{v_x^3}{v^2}\right),\quad v=u_x,$$ which is not linearizable as is clear from \eqref{special-pde-c} and the subsequent paragraph. Inspired by the previous arguments we can use
the scaling differential invariant  $w=v_x/v=u_{xx}/u_x$ to transform it to the MKdV equation
$$w_t=w_{xxx}-\frac{3}{2}w^2w_x.$$
The Miura transformation $\tilde{w}=1/4w^2\pm 1/2 w_x$ which maps this equation to the  KdV equation
$$\tilde{w}_t=\tilde{w}_{xxx}-6\tilde{w}\tilde{w}_x$$
produces the Schwarzian transformation (the unique M\"obius differential invariant) between the Schwarzian-KdV and KdV equation:
$$\tilde{w}=-\frac{1}{2}\{u;x\}=-\frac{1}{2}\left[\frac{u_{xxx}}{u_x}-\frac{3}{2}\left(\frac{u_{xx}}{u_x}\right)^2\right].$$

There is another transformation that gives the same result, namely:
$$\tilde{w}=\frac{1}{2}\left[\frac{u_{xxx}}{u_x}-\frac{1}{2}\left(\frac{u_{xx}}{u_x}\right)^2\right]
=\frac{(\sqrt{u_x})_{xx}}{\sqrt{u_x}}.$$

The B\"acklund transformation establishing this connection is well-known (see for example \cite{Hone2009}):
If $w$ and $\tilde{w}$ are both solutions of the KdV equation
$$w_t=w_{xxx}-6ww_x,$$ then
\begin{equation}\label{Baeck}
  w=2(\log u)_{xx}+\tilde{w}
\end{equation}
is a B\"acklund transformation for the KdV equation provided that $u$ satisfies
$$\frac{u_t}{u_x}=\{u;x\}+k^2,  \quad \tilde{w}=6k^2+\frac{(\sqrt{u_x})_{xx}}{\sqrt{u_x}}.$$
The transformation \eqref{Baeck} can be derived from the truncation of the Laurent-type expansion for the   KdV equation at the zero order term.
We note that $k^2$ can be removed by a Galilien transformation.
\end{exa}

Using our approach we want to consider the Harry-Dym equation and establish the
connection with the Schwarzian-KdV equation.

\begin{exa}

The Harry-Dym equation
\begin{equation}\label{harry}
  u_t=u^3u_{xxx}
\end{equation}
is integrable and hence possesses an infinite hierarchy of higher (generalized) symmetries and conservation laws \cite{LeoLeoSolianiSolombrino1983, MikhailovSokolov2009}.
The point symmetry algebra has the structure $\lie=\Sl(2,\mathbb{R})\oplus\sf{A}_2$ \cite{Basarab-HorwathGuengoerLahno2013}. This equation admits peakons and compactons (i.e. weak soliton solutions).

By using the multiplier (or characteristic function) $\mu(u)=u^{-2}$ and noting that
$$uu_{xxx}=D_x[uu_{xx}-\frac{1}{2} u_x^2]$$ we can re-express \eqref{harry}  in the form of a conservation law:

$$v_t=D_x\left[\frac{v_{xx}}{v^3}-\frac{3}{2}\frac{v_x^2}{v^4}\right],  \quad v=u^{-1}.$$
Using the substitution $v=w_x$ or $w=D_x^{-1}v$ (inverse differentiation or potentiation transformation) we find the potential equation
\begin{equation}\label{potential}
  w_t=\frac{w_{xxx}}{w_x^3}-\frac{3}{2}\frac{w_{xx}^2}{w_x^4}=w_x^{-2}\{w;x\},
\end{equation}
for which the point symmetry algebra is six-dimensional with a basis \cite{Basarab-HorwathGuengoerLahno2013}
\begin{equation}\label{sym-SKdV}
  \langle \gen x, 2x\gen x, x^2\gen x\rangle \oplus \langle \gen t, \gen w, t\gen t+\frac{w}{3}\gen w\rangle.
\end{equation}
This algebra is isomorphic to that of the SKdV equation. Note that $w=D_x^{-1}u^{-1}$ is a nonlocal variable so that it induces a nonlocal symmetry.
The hodograph transformation
$$t=t, \quad y=w(t,x), \quad z(t,y)=x$$
implies the following transformation rules
$$w_t=-\frac{z_t}{z_y}, \quad \{w;x\}=-w_x^2\{z;y\}$$
and transforms \eqref{potential} to the SKdV equation
\begin{equation}\label{skdv-eq}
  \frac{z_t}{z_y}=\{z;y\}.
\end{equation}
From the previous example it follows that the differential substitution $z=\{\tilde{z};y\}$ (differential invariant) maps it to the KdV equation
$$\tilde{z}_t=\tilde{z}_{yyy}+3\tilde{z}\tilde{z}_y.$$

\end{exa}

Another equation which can be written in conservation form is given in the following example:
\begin{exa}\label{STO-eq}
Sharma-Tasso-Olver (STO) equation (a conservation law):
\begin{equation}\label{STO}
  u_t=u_{xxx}+3uu_{xx}+3u_x^2+3u^2u_x=u_{xxx}+\frac{3}{2}(u^2)_{xx}+(u^3)_x
\end{equation}
admits a finite-dimensional (3-dimensional) point symmetry algebra (compare this with Eq. \eqref{gen-STO} for $\alpha=a=0$).

However, when written as a system by  introducing the auxiliary variable $v$ (the potential function), we have
\begin{equation}\label{sys}
 v_x=u,\quad v_t=u_{xx}+u^3+3uu_x,
\end{equation}
which admits an infinite-dimensional  symmetry algebra generated by the vector fields
\begin{equation}\label{sym-STO}
  \begin{split}
      & T=\gen t,\quad P=\gen x,\quad D=x\gen x+3t\gen t-u\gen u, \\\
      & \mathbf{v}(\varrho)=e^{-v}[(\varrho u-\varrho_x)\gen u-\varrho \gen v],\quad \text{with}\;\;  \varrho_t=\varrho_{xxx}
   \end{split}
\end{equation}
\end{exa}

Projecting the symmetry $\mathbf{v}(\varrho)$ onto the $(t,x,v)$ coordinates suggests that if we put  $v=\ln \psi$, the equation is changed to $\varrho \gen \psi$ so that $u=v_x=\psi_x/\psi$.
This immediately shows that there is a non-point transformation $u=\psi_x/\psi$ (the Cole-Hopf transformation) mapping the STO equation to
$$D_x\left(\frac{\psi_t-\psi_{xxx}}{\psi}\right)=0.$$
So whenever $\psi$ solves the linear KdV equation
$$\psi_t=\psi_{xxx},$$ then $u=\psi_x/\psi$ solves the STO equation.

Alternatively, we can look at the symmetry algebra of the potential equation obtained by eliminating $u$ in \eqref{sys}
\begin{equation}\label{potential-2}
  v_t=v_{xxx}+3v_xv_{xx}+v_x^3.
\end{equation}
This equation is still in the class of linearizable equations \eqref{w-invar-pde}. It has an infinite dimensional symmetry algebra with a basis given by $\langle\gen t, \gen x, 3t\gen t+x\gen x, \gen v, \mathbf{v}(\varrho)\rangle$, where
$$\mathbf{v}(\varrho)=e^{-v}\varrho\gen v,  \quad \varrho_t=\varrho_{xxx}$$
from the formula \eqref{inf-span} for $\alpha=0$.

The transformation $v=\ln \psi$  linearizes \eqref{potential-2} to $\psi_t=\psi_{xxx}$ with symmetry $\tilde{\mathbf{v}}(\varrho)=\varrho\gen{\psi}$ reflecting the superposition law for the linear PDE $\varrho_t=\varrho_{xxx}$.

\begin{rmk}
  As a by-product we have obtained  an infinite-dimensional nonlocal symmetry of the original equation in the form (see  \eqref{nlocal})
$$\hat{\mathbf{v}}(\varrho)=e^{-v}(\varrho u-\varrho_x)\gen u,  \quad v=\partial_x^{-1}u.$$
\end{rmk}

\subsection{Conservation laws}\label{con-law}
We wish to construct local conservation laws for the class \eqref{class-f-g} by looking for characteristic functions (or multipliers) of the form $Q(u)$. We require the product $Q\Delta$, where $\Delta=u_t-u_{xxx}-F$, where $F$ is of the form \eqref{class},  to be a total time-space divergence in the form
\begin{equation}\label{CL}
  Q\Delta=D_t T+D_x X=0,
\end{equation}
on all solutions $u(t,x)$. Here $D_t$, $D_x$ are total differential operators with respect to $t$ and $x$ respectively and  $T$, $X$ are conserved density and flux (differential functions to be determined by $Q$), respectively.
In view of \eqref{CL}, the spatial integral of  the conserved density $T$ satisfies
$$\frac{d}{dt}\int_{\mathbb{R}}Tdx=-X\Big\vert_{-\infty}^\infty.$$ If the flux $X$ vanishes at $x\to\pm \infty$, then
$$\mathcal{T}[u]=\int_{\mathbb{R}}Tdx=\rm{const.}$$ is a conserved quantity for $\Delta=0$.

A necessary and sufficient condition for the existence of the characteristic function $Q$ is that we must have
\begin{equation}\label{CL-cond}
  \mathsf{E}_u[Q\Delta]=0,
\end{equation}
where $\mathsf{E}_u$ is the temporal-spatial Euler-Lagrange operator with respect to $u$ defined by
$$\mathsf{E}_u=\frac{\partial}{\partial u}-D_t\frac{\partial}{\partial u_t}-D_x\frac{\partial}{\partial u_x}+D_x^2\frac{\partial}{\partial u_{xx}}-D_x^3\frac{\partial}{\partial u_{xxx}}.$$
Realizing this condition and then splitting with respect to the linearly independent derivatives we obtain determining equations for $Q$. This also gives conditions on $f$.

We distinguish two cases:

\par 1.) $a\ne 0$, $g$ {\it a priori} generic: We obtain a system of two determining equations and find $Q'=fQ$. This automatically requires the linearity condition $f'+f^2=g$ and means that the conservation law is exactly the linear KdV equation
\begin{equation}\label{CL-1}
  D_t\left[\int Q(u)du\right]=D_x^3\left[\int Q(u)du\right]+a D_x^2\left[\int Q(u)du\right],  \quad Q(u)=e^{\int f(u)du}.
\end{equation}
We recover the linearizing transformation $w=\int Q(u)du$ (a conserved density).

\par 2.) $a=0$, $g$ {\it a priori} generic: The determining equations give a second order linear equation for $Q$:
\begin{equation}\label{a-0}
  Q''-3fQ'+(2g-3f')Q=0.
\end{equation}
In this case we can find  two linearly independent, nonconstant characteristic functions $Q$ when $2g-3f'\ne 0$. Again, one can see that if we impose the condition $Q'=fQ$ on \eqref{a-0} then we must have the linearity condition $g=f'+f^2$. This implies that even in the linearizable case we can obtain two different characteristic functions. Otherwise we have to find at least one solution of \eqref{a-0} by inspection or by known standard methods of solving a variable coefficient second order linear equation. If we have $2g-3f'= 0$ then one solution is at our disposal by just one quadrature.

In the linearizable case, with the condition $2g-3f'\ne 0$,  we have
\begin{equation}\label{Q1-2}
  Q_1=e^{\int f(u)du},  \qquad Q_2=Q_1 \int Q_1(u)du.
\end{equation}
The first solution leads to the conservation law \eqref{CL-1} with $a=0$. From the second solution we find the conservation law (not a linear KdV equation)
\begin{equation}\label{CL-2}
  D_t \left[\int Q_2 du\right]=D_x\left[Q_2 u_{xx}+\frac{1}{2}(3fQ_2-Q_2')u_x^2\right].
\end{equation}
We can write \eqref{CL-2} as a potential system
\begin{equation}\label{pot-sys}
  v_x=\int Q_2 du,  \qquad v_t=Q_2 u_{xx}+\frac{1}{2}(3fQ_2-Q_2')u_x^2.
\end{equation}
Of course, if we can solve the first equation of \eqref{pot-sys} for $u$, we can substitute it into the second equation and obtain the potential equation for $v$. This potentiation process preserves the form of the original equation.

In the special case $g=0$ ($f'\ne 0$) the left-hand side of \eqref{a-0} is an exact differential and can be integrated to give the characteristic functions $Q$:
\begin{equation}\label{Q}
  Q_1^3,  \qquad Q_1^3 \int  Q_1^{-3}du,  \quad \text{where}\;\; Q_1=e^{\int f(u)du}.
\end{equation}
for the conservation laws of
\begin{equation}\label{g=0}
  u_t=u_{xxx}+3f(u)u_x u_{xx}.
\end{equation}

For the special case $f(u)=\alpha u^{-1}$ the linearization (for $\alpha=1$, $f'+f^2=0$), transformation to the KdV equation (for $\alpha=-1$) and conservation laws for different values of $\alpha$ have recently been investigated in \cite{SenAhalparaThyagarajaKrishnaswami2012, SilvaFreireSampaio2016}. The equation  studied there depends on two parameters $a, \epsilon$, one of which can be  rescaled to be equal to $1$. Let us remark that Eq. \eqref{g=0} admits an infinite-dimensional symmetry group indicating its equivalence to a linear equation. When $\alpha=-1$, Eq. \eqref{g=0} is only invariant under a finite-dimensional (four-dimensional) symmetry group. In this case, the Cole-Hopf transformation (differential invariant of the symmetry $u\gen u$) maps its to the MKdV equation and hence to the KdV equation by the Miura transformation.

We now present some examples of subcases of \eqref{class-f-g}.

\begin{exa}
We consider the linearizable subclass of \eqref{special-pde} for the special choice $a=0$
\begin{equation}
 u_t=u_{xxx}+\frac{3\alpha}{u}u_xu_{xx}+\frac{\alpha(\alpha-1)}{u^2}u_x^3.
\end{equation}
Here we have $f=\alpha u^{-1}$ so that
$$Q_1(u)=u^{\alpha},  \qquad Q_2(u)=\frac{1}{\alpha+1}u^{2\alpha+1}, \quad \alpha\ne -1.$$
We are interested in the second solution as we know the first solution produces  the linearizing transformation $w=u^{\alpha+1}$. From \eqref{CL-2} we obtain the conservation law
$$D_t[u^{2(\alpha+1)}]=2(\alpha+1)D_x\left[u^{2\alpha+1}u_{xx}+\frac{\alpha-1}{2}u^{2\alpha}u_x^2\right].$$
We define $v_x=u^{2(\alpha+1)}$ for potentiation. Substituting $u=v_x^{1/2(\alpha+1)}$ into
$$v_t=2(\alpha+1)u^{2\alpha+1}u_{xx}+\frac{\alpha-1}{2}u^{2\alpha}u_x^2$$ yields the potential equation for $v$
\begin{equation}\label{potential-3}
  v_t=v_{xxx}-\frac{3}{4}v_x^{-1}v_{xx}^2.
\end{equation}
From example \eqref{ex-c} we know that this equation is linearizable to the linear KdV equation by the substitution $w=\sqrt{v_x}=u^{\alpha+1}$. We have shown that   producing a conservation law by $Q_2$ is equivalent to the linearization of the equation (as well when a potential form of the equation is introduced from the conservation law).

For $\alpha=-1$ we have
$$Q_1=u^{-1},  \quad Q_2=\frac{\ln u}{u}.$$
$Q_1$ corresponds to the linearization $D_t(\ln u)=D_x^3 (\ln u)$. For $Q_2$, we can, using \eqref{CL-2}, write the conservation law
$$D_t[(\ln u)^2]=D_x\left[2\frac{\ln u}{u}-\left(\frac{2\ln u+1}{u^2}\right)u_x^2\right].$$
The substitution $u=e^{\sqrt{v_x}}$  transforms  it to \eqref{potential-3} which is linearized by the transformation $w=\sqrt{v_x}$. The final transformation linearizing the initial equation is $w=\ln u$ (which corresponds to $Q_1$).
\end{exa}

\begin{exa}\label{non-lin}
  An equation that is nonlinearizable  by a point transformation:
\begin{equation}
  u_t=u_{xxx}+c\left(2\frac{u_x u_{xx}}{u}-\frac{u_x^3}{u^2}\right),  \quad c\not=-\frac{3}{4}.
\end{equation}
We have $2g-3f'=0$ and from Eq. \eqref{a-0} we find only one nonconstant $Q(u)=u^{2c+1}$, $c\not=-1/2$. For $c=-1/2$ we have $Q(u)=\ln u$. The corresponding conservation law for the first case is
$$D_t[u^{2(c+1)}]=2(c+1)D_x[u^{2c+1}u_{xx}-\frac{1}{2}u^{2c}u_x^2],  \quad c\not= -1,-\frac{1}{2}.$$
If we define $u=\phi_x^{1/(2(c+1))}$ and integrate the above conservation law we get the potential equation for $\phi$
\begin{equation}\label{potential-4}
  \phi_t=\phi_{xxx}-\frac{4c+3}{4(c+1)}\frac{\phi_{xx}^2}{\phi_x}.
\end{equation}
When $c=-3/4$ the equation is linearizable by the differential substitution $u=\phi_x^2$. We note that we can always absorb any arbitrary integration  function of $t$ into $\phi$.

In the special case $c=-3/2$, Eq. \eqref{potential-4} reduces to the SKdV equation.  In other words,
\begin{equation}\label{int-sol-KdV}
  u_t=u_{xxx}-\frac{3}{u}u_xu_{xx}+\frac{3}{2u^2}u_x^3
\end{equation}
is mapped to a differential consequence of the SKdV in $\phi$ : $D_x[SKdV]=0$  by the substitution $u=\phi_x^{-1}$. Invariance of \eqref{int-sol-KdV} under $u\to u^{-1}$ produces another substitution $u=\phi_x$ achieving the same connection.

For $c=0$ we  recover the link \eqref{equi-KdV}.

As a final remark, we make the observation that Eq. \eqref{beta-not-2}, locally equivalent to \eqref{int-sol-KdV}, can be written in the conservative form
$$2uu_t=D_t(u^2)=2D_x[uu_{xx}-2u_x^2].$$
The differential substitution (or Baecklund transformation in the terminology of \cite{Carillo2017}) $u^2=\phi_x$ maps it to the SKdV equation in $\phi$ (see \eqref{SKdV}) after integration with respect to $x$. This fact is presented  in \cite{Carillo2017} as Proposition 2. The fact that that Eq. \eqref{beta-not-2} is invariant under the reciprocal transformation $u\to u^{-1}$ demonstrates that there is another Baecklund transformation $u^{-2}=\phi_x$ realizing the above-mentioned link. The latter Baecklund transformation can also be derived using the second multiplier $Q_2=u^{-3}$ provided by the formula \eqref{Q} for $f(u)=-1/u$. The corresponding conservation law is
$$D_t(u^{-2})=-2D_x(u^{-3}u_{xx}).$$
\end{exa}


\end{document}